%%%%%%%%%%%%%%%%%%%%%%%%%%%%%%%%%%%%%%%%%%%%%%%%%%%%%%%%%%%%%%%%%%%%%%%%%%%%%%%%
%2345678901234567890123456789012345678901234567890123456789012345678901234567890
%        1         2         3         4         5         6         7         8

\documentclass[letterpaper, 10 pt, conference]{ieeeconf}  % Comment this line out if you need a4paper

\IEEEoverridecommandlockouts                              % This command is only needed if 
                                                          % you want to use the \thanks command

\overrideIEEEmargins                                      % Needed to meet printer requirements.

%In case you encounter the following error:
%Error 1010 The PDF file may be corrupt (unable to open PDF file) OR
%Error 1000 An error occurred while parsing a contents stream. Unable to analyze the PDF file.
%This is a known problem with pdfLaTeX conversion filter. The file cannot be opened with acrobat reader
%Please use one of the alternatives below to circumvent this error by uncommenting one or the other
%\pdfobjcompresslevel=0
%\pdfminorversion=4

% See the \addtolength command later in the file to balance the column lengths
% on the last page of the document

% The following packages can be found on http:\\www.ctan.org
%\usepackage{graphics} % for pdf, bitmapped graphics files
%\usepackage{epsfig} % for postscript graphics files
%\usepackage{mathptmx} % assumes new font selection scheme installed
%\usepackage{times} % assumes new font selection scheme installed
%\usepackage{amsmath} % assumes amsmath package installed
%\usepackage{amssymb}  % assumes amsmath package installed

\title{\LARGE \bf 
R3R: Decentralized Multi-Agent Collision Avoidance with Infinite-Horizon Safety
}

\author{Thomas Marshall Vielmetti$^{1}$, Devansh R. Agrawal$^{2}$, and Dimitra Panagou$^{3}$% <-this % stops a space
\thanks{The authors would like to acknowledge the support of the National Science Foundation (NSF) under grant no. 2137195.}
\thanks{$^{1}$Thomas Marshall Vielmetti is with the Department of Electrical and Computer Engineering,
        University of Michigan, Ann Arbor, MI 48109, USA
        {\tt\small mvielmet@umich.edu}}%
\thanks{$^{2}$Devansh R. Agrawal is with the Department of Aerospace Engineering,
        University of Michigan, Ann Arbor, MI 48109, USA
        {\tt\small devansh@umich.edu}}%
\thanks{$^{3}$Dimitra Panagou is with the Department of Robotics and Department of Aerospace Engineering,
        University of Michigan, Ann Arbor, MI 48109, USA
        {\tt\small dpanagou@umich.edu}}%
}

\setlength{\textfloatsep}{6pt} % Space between text and top/bottom floats

\input{preamble/packages}

%%%%%%%%%%%%%%%%%%
%% Domains
%%%%%%%%%%%%%%%%%%
\newcommand{\naturals}{\mathbb{N}}

\newcommand{\reals}{\mathbb{R}}
\newcommand{\R}{\reals}
\newcommand{\Rnonneg}{\reals_{\geq 0}}
\newcommand{\Rplus}{\reals_{>0}}

\renewcommand{\emptyset}{\varnothing}

\newcommand{\norm}[1]{\left\Vert #1 \right \Vert}

\newcommand{\bmat}[1]{\begin{bmatrix}#1\end{bmatrix}}

% \newcommand{\EKinf}{\mathcal{K}_{e, \infty}}

% \newcommand{\classK}{class~$\K$}
% \newcommand{\classKinf}{class~$\Kinf$}
% % \newcommand{\classEKinf}{class~$\EKinf$}
% \newcommand{\classKL}{class~$\KL$}

\newcommand{\Acal}{\mathcal{A}}
\newcommand{\Bcal}{\mathcal{B}}
\newcommand{\Ccal}{\mathcal{C}}

\newcommand{\Gcal}{\mathcal{G}}

\newcommand{\Ical}{\mathcal{I}}

\newcommand{\Ncal}{\mathcal{N}}
\newcommand{\Ocal}{\mathcal{O}}

\newcommand{\Scal}{\mathcal{S}}
\newcommand{\Tcal}{\mathcal{T}}
\newcommand{\Ucal}{\mathcal{U}}

\newcommand{\Wcal}{\mathcal{W}}
\newcommand{\Xcal}{\mathcal{X}}

%%%%%%%%%%%%%%%%%%%%%%%%%%%%%
%% Equation Formatting
%%%%%%%%%%%%%%%%%%%%%%%%%%%%%
\newcommand{\eqn}[1]{\begin{align} #1 \end{align}}
\newcommand{\neqn}[1]{\begin{align*} #1 \end{align*}}

%%%%%% SET COMMAND BCZ ITS NICE
\newcommand{\set}[1]{\{#1\}}

%%%%%%%%%%%%%%%%%%%%%%%%%%%%
%% Editting Formatting
%%%%%%%%%%%%%%%%%%%%%%%%%%%%
\usepackage[normalem]{ulem}

% \renewcommand{\delete}[1]{{}}
% \renewcommand{\edit}[1]{{#1}}

%%%%%%%%%%%%%%%%%%%%%%%%%%%%%%%%%%%%%%%%%%%%%%%%%
%% ICCBFS
%%%%%%%%%%%%%%%%%%%%%%%%%%%%%%%%%%%%%%%%%%%%%%%%%

%%%%%%%%%%%%%%%%%%%%%%%%%%%%%%%%%%%%%%%%%
%% Gatekeeper 
%%%%%%%%%%%%%%%%%%%%%%%%%%%%%%%%%%%%%%%%%
\newcommand{\gatekeeper}{\texttt{gatekeeper}}

\newcommand{\nom}{{\rm nom}}
\newcommand{\can}{{\rm can}}
\newcommand{\com}{{\rm com}}
\newcommand{\back}{{\rm bak}}

\newcommand{\Rplan}{R^\text{plan}}
\newcommand{\Rcomm}{R^\text{comm}}
\newcommand{\x}{{\rm x}}
\renewcommand{\u}{{\rm u}}
\newcommand{\p}{{\rm p}}

%%%%%%%%%%%%%%%%%%%%%%%%%%%%%%%%%%%%%%
%% Perception Paper Formatting
%%%%%%%%%%%%%%%%%%%%%%%%%%%%%%%%%%%%%%

\newcolumntype{g}{>{\columncolor{gray!30}}r}

\sisetup{
    output-decimal-marker = {.},       % Ensures consistent decimal alignment
    table-format = 3.2,               % Three digits before and two after the decimal
    detect-weight = true,             % Detects bold text
    detect-family = true              % Ensures font style is respected
}

%%%%%%%%%%%%%%%%%%%%%%%%%%%%%%%%%%%%%%%%%
%% Multiagent Clarity 
%%%%%%%%%%%%%%%%%%%%%%%%%%%%%%%%%%%%%%%%%

\acrodef{BCH}[BCH]{Baker-Campbell-Hausdorff}
\acrodef{CBF}[CBF]{Control Barrier Function}
\acrodef{CBF-QP}[CBF-QP]{Control Barrier Function Quadratic Program}
\acrodef{CDC}[CDC]{Conference on Decision and Control}
\acrodef{CESDF}[CESDF]{Certified ESDF}
\acrodef{CLF}[CLF]{Control Lyapunov Function}
\acrodef{CVO}[C-VO]{Certified Visual Odometry}
\acrodef{DCT}[DCT]{Discrete Cosine Transform}
\acrodef{DMP}[DMP]{Distance Map Planner}
\acrodef{EKF}[EKF]{Extended Kalman Filter}
\acrodef{ESDF}[ESDF]{Euclidean Signed Distance Field}
\acrodef{EZ}[EZ]{Engagement Zone}
\acrodef{FOV}[FoV]{Field of View}
\acrodef{FPV}[FPV]{First Person View}
\acrodef{GNC}[GNC]{Graduated-Nonconvexity}
\acrodef{GP}[GP]{Gaussian Process}
\acrodef{HOCBF}[HOCBF]{Higher Order Control Barrier Function}
\acrodef{ICCBF}[ICCBF]{Input-Constrained Control Barrier Function}
\acrodef{IEEE}[IEEE]{Institute of Electrical and Electronics Engineers}
\acrodef{IMU}[IMU]{Inertial Measurement Unit}
\acrodef{ISS}[ISS]{Input-to-State}
\acrodef{KF}[KF]{Kalman Filter}
\acrodef{ML}[ML]{Machine Learning}
\acrodef{MPC}{Model Predictive Control}
\acrodef{NGPKF}[NGPKF]{Numerical Gaussian Process Kalman Filter}
\acrodef{ODE}[ODE]{Ordinary Differential Equation}
\acrodef{QP}[QP]{Quadratic Program}
\acrodef{RGBD}[RGBD]{RGB-Depth}
\acrodef{RL}[RL]{Reinforcement Learning}
\acrodef{RoS}[RoS]{Rate of Spread}
\acrodef{SDE}[SDE]{Stochastic Differential Equation}
\acrodef{SDF}[SDF]{Signed Distance Field}
\acrodef{SFC}[SFC]{Safe Flight Corridor}
\acrodef{SLAM}[SLAM]{Simultaneous Localization and Mapping}
\acrodef{SOS}[SOS]{Sum of Squares}
\acrodef{SVD}[SVD]{Singular Value Decomposition}
\acrodef{TCAC}[TCAC]{Technical Committee on Aerospace Controls}
\acrodef{TLS}[TLS]{Truncated Least Squares}
\acrodef{TSDF}[TSDF]{Truncated Signed Distance Field}
\acrodef{TSD}[TSD]{Target Spatial Distribution}
\acrodef{VIO}[VIO]{Visual Inertial Odometry}
\acrodef{VO}[VO]{Visual Odometry}
\acrodef{WLS}[WLS]{Weighted Least Squares}

%% There are a number of predefined theorem-like environments in
%% ifacconf.cls:
%%
%% \begin{thm} ... \end{thm}            % Theorem
%% \begin{lem} ... \end{lem}            % Lemma
%% \begin{claim} ... \end{claim}        % Claim
%% \begin{conj} ... \end{conj}          % Conjecture
%% \begin{cor} ... \end{cor}            % Corollary
%% \begin{fact} ... \end{fact}          % Fact
%% \begin{hypo} ... \end{hypo}          % Hypothesis
%% \begin{prop} ... \end{prop}          % Proposition
%% \begin{crit} ... \end{crit}          % Criterion

\usepackage{amsthm}

%%%%%%%%%%%%%%%%%%%%%%
%% PLAIN STYLE
%%%%%%%%%%%%%%%%%%%%%%
\theoremstyle{plain}
\newtheorem{theorem}{Theorem}

\newtheorem{lemma}{Lemma}
\newtheorem{problem}{Problem}
\newtheorem{definition}{Definition}

\newtheorem{assumption}{Assumption}
\newtheorem{remark}{Remark}

%%%%%%%%%%%%%%%%%%%%%%%%%%
%% DEFINITION STYLE
%%%%%%%%%%%%%%%%%%%%%%%%%%
% \theoremstyle{definition}

%%%%%%%%%%%%%%%%%%%%%%%%%%%
%% REMARK STYLE 
%%%%%%%%%%%%%%%%%%%%%%%%%%%
\theoremstyle{remark}

%% make emph the same as \textbf{} in definition environments
% Automatically redefine \emph inside definition environments
\AtBeginEnvironment{definition}{\let\oldemph\emph \renewcommand{\emph}[1]{\textbf{#1}}}
\AtEndEnvironment{definition}{\let\emph\oldemph}

%%%%%%%%%%%%%%%%%%%%%%%%%%%
%% CLEVERREF
%%%%%%%%%%%%%%%%%%%%%%%%%%%
\usepackage{cleveref}
\crefformat{problem}{Problem~#2#1#3}
\crefformat{assumption}{Assumption~#2#1#3}

\RestyleAlgo{ruled}

\usepackage{comment}

\usepackage{cleveref}
\crefname{assumption}{assumption}{assumptions}
\Crefname{assumption}{Assumption}{Assumptions}

\crefname{definition}{Def.}{Defs.}
\crefname{figure}{Fig.}{Figs.}

\theoremstyle{theorem}

% Register the environment with cleveref for proper naming
\crefname{protocol}{Protocol}{protocols}
\Crefname{protocol}{Protocol}{protocols}

\SetAlgoSkip{}

\begin{document}

\maketitle
\thispagestyle{empty}
\pagestyle{empty}

%%%%%%%%%%%%%%%%%%%%%%%%%%%%%%%%%%%%%%%%%%%%%%%%%%%%%%%%%%%%%%%%%%%%%%%%%%%%%%%%
\begin{abstract}
Existing decentralized methods for multi-agent motion planning lack formal, infinite-horizon safety guarantees, especially for communication-constrained systems.
We present R3R which, to our knowledge, is the first decentralized and asynchronous framework for multi-agent motion planning under range-limited communication constraints with infinite-horizon safety guarantees for systems of nonlinear agents.
R3R's novelty lies in combining our \gatekeeper{} safety framework with a geometric constraint termed R-Boundedness, which together establish a formal link between an agent's communication radius and its ability to plan safely.
% We constrain trajectories to lie within a fixed planning radius that is a function of the agent's communication radius, which enables trajectories to be shown provably safe for all time, using only local information.
We constrain trajectories to lie within a fixed planning radius, determined by a function of the agent's communication radius.
This enables trajectories to be certified as provably safe for all time using only local information.
Our algorithm is fully asynchronous, and ensures the forward invariance of these guarantees even in time-varying networks where agents asynchronously join and replan.
We evaluate our approach in simulations of up to 128 Dubins vehicles, validating our theoretical safety guarantees in dense, obstacle-rich scenarios.
We further show that R3R's computational complexity scales with local agent density rather than problem size, providing a practical solution for scalable and provably safe multi-agent systems\footnote{Code available at \url{github.com/MarshallVielmetti/r3r_decentralized_gatekeeper}}.
\end{abstract}

%%%%%%%%%%%%%%%%%%%%%%%%%%%%%%%%%%%%%%%%%%%%%%%%%%%%%%%%%%%%%%%%%%%%%%%%%%%%%%%%
\section{Introduction}\label{sec:introduction}
%%%%%%%%%%%%%%%%%%%%%%%%%%%%%%%%%%%%%%%%%%%%%%%%%%%%%%%%%%%%%%%%%%%%%%%%%%%%%%%%

From last mile delivery~\cite{eskandaripourLastMileDroneDelivery2023} to warehouse robotics~\cite{wenSwarmRoboticsControl2018} to autonomous drone swarms~\cite{javedStateoftheArtFutureResearch2024}, deploying multi-agent systems safely in the real world is bottlenecked by a lack of formal safety guarantees.
This challenge, known as multi-agent motion planning (MAMP), forces a trade-off between guarantees, coordination, and scalability. 
Centralized planners can provide strong safety guarantees but fail to scale and are prone to single-node failure. 
Conversely, decentralized methods scale well to large teams but have thus far been unable to provide formal, infinite-horizon safety guarantees under distance-based communication constraints for non-trivial (i.e., beyond single/double integrator) systems.

% Discussion of centralized methods & classical MAPF
Multi-agent path finding (MAPF), a discrete state precursor to MAMP, is typically formulated as a graph search problem. 
Classical MAPF methods, such as Conflict-Based Search and its variants~\cite{sharonConflictbasedSearchOptimal2015,boyarski_icbs_2021,gange_lazy_2019,liSymmetryBreakingConstraintsGridBased2019}, provide strong optimality or bounded suboptimality guarantees, but these approaches are centralized and limited to discrete domains and thus may not produce dynamically feasible solutions.
Some MAMP methods have extended these concepts to continuous space using trajectory optimization or sampling-based planning, but these centralized approaches struggle with scalability and real-time performance~\cite{capMultiagentRRTSamplingbased2013}.

Planning-based methods comprise the majority of the recent state-of-the-art. 
% These methods extend the horizon over which safety is guaranteed by constructing longer-horizon trajectories. % OLD VERSION
These methods guarantee safety by generating safe trajectories over some finite horizon. % new version
This also allows for agents to take preemptive actions, reducing deadlocks and allowing for more direct paths compared to reactive methods.
Despite being distributed in computation, these methods are often centralized in information, or use synchronization to enable coordination.
In MADER~\cite{tordesillas2021mader,kondoRobustMADERDecentralized2023}, agents construct trajectories individually, but must share and receive full trajectories with all other agents in the network.
Other methods require synchronous replanning/state updates \cite{parkOnlineDistributedTrajectory2022,parkDLSCDistributedMultiAgent2023,leeMCSwarmMinimalCommunicationMultiAgent2025,parkDecentralizedTrajectoryPlanning2025} and full state sharing among all agents in the network \cite{zhouEGOSwarmFullyAutonomous2020,zhou_swarm_2022,senbaslarDREAMDecentralizedRealTime2025}.
While these methods have been shown to outperform reactionary methods, they still provide only finite-horizon guarantees.

Critically, no existing decentralized, asynchronous approach guarantees infinite-horizon safety under distance-based communication constraints. % for broad classes of dynamical systems.% under certain assumptions. 
This gap motivates our proposed framework, R3R. 
To the best of our knowledge, R3R represents the first decentralized and asynchronous framework to provide infinite-horizon safety guarantees for multi-agent motion planning under distance-based communication constraints which generalizes to broad classes of systems under certain assumptions. 
This is achieved by integrating a novel geometric constraint, which we term R-Boundedness, with \gatekeeper{}~\cite{agrawalGatekeeperOnlineSafety2024}, our safety framework developed for single agents, to provide recursive feasibility guarantees in multi-agent systems.
Our contributions are as follows:
\begin{enumerate}
    \item \emph{R3R}, a provably safe, decentralized, and asynchronous framework for multi-agent motion planning under communication constraints. 
    \item \emph{R-Boundedness}, a novel geometric constraint on trajectory generation, which we show links an agent's ability to communicate with its ability to plan safely.
    \item We demonstrate the efficacy of our method in \emph{simulations of up to 128 agents}, showcasing safe and scalable multi-agent coordination.
\end{enumerate}

The paper is organized as follows: Section \ref{sec:prelims} introduces preliminaries and the problem statement, section \ref{sec:approach} presents the proposed approach, the methodology is evaluated in simulation in section \ref{sec:sims}, and conclusions and thoughts for future work are presented in section \ref{sec:conclusions}.

%%%%%%%%%%%%%%%%%%%%%%%%%%%%%%%%%%%%%%%%%%%%%%%%%%%%%%%%%%%%%%%%%%%%%%%%%%%%%%%%
\section{Preliminaries and Problem Statement}\label{sec:prelims}
%%%%%%%%%%%%%%%%%%%%%%%%%%%%%%%%%%%%%%%%%%%%%%%%%%%%%%%%%%%%%%%%%%%%%%%%%%%%%%%%

\emph{Notation}: $\R, \Rnonneg, \Rplus$ are the sets of reals, non-negative reals, and positive reals. Let $\Bcal_R(x) = \{ y : \norm{x - y}_2 \leq R\}$ be the closed ball of radius $R \geq 0$ centered at $x \in \R^n$. $\Acal(t) \subset \mathbb{N}$ denotes the set of all agents at time $t$, where $N = |\Acal(t)|$ is the total number of agents in the network at that time. 

\subsection{Preliminaries}
\noindent Consider a dynamical system
\begin{equation}\label{eq:system}
    \dot x = f(t, x, u),
\end{equation}
where $x\in\Xcal\subseteq\R^n$ is the state and $u \in \Ucal \subseteq \R^m$ is the control input. The function $f : \Rnonneg \times \Xcal \times \Ucal \to \R^n$ is piecewise continuous in $t$ and locally Lipschitz in $x$ and $u$. Agents operate in a workspace $\Wcal \subset \R^d$, where $d \in \set{2, 3}$. Each agent's state $x$ contains a position component, $x = \bmat{p^\top, \bar{x}^\top}^\top$, where $p \in \Wcal$ is the position and $\bar{x} \in \R^{n - d}$ represents the remaining states. Define the projection $\Pi : \Xcal \to \Wcal$ by $\Pi(x) = p$.

\begin{definition}[Agent]\label{def:agent}
    An agent represents a robot or other actor that interacts with the system of other agents.  An agent $i \in \Acal$ has the following properties:
    \begin{enumerate}
        \item State $x_i$ which evolves through a system satisfying~\eqref{eq:system}.
        \item Inter-agent avoidance distance of $\delta \in \Rplus$, meaning no agent may come within a distance strictly less than $\delta$ of agent $i$.
        \item Communication radius $\Rcomm \in \Rplus$.
    \end{enumerate}
\end{definition}
\begin{assumption}\label{assumption:homogeneous}
    All agents have a common $\delta$ and $\Rcomm$.
\end{assumption}

\begin{definition}[Trajectory]
    A \textbf{trajectory}\footnote{The time-interval $T$ can be a finite horizon, e.g., $T = [t_0, t_1]$ or an infinite horizon, e.g., $T = [t_0, \infty)$.} is the tuple $\Tcal = (T, \x, \u)$, where $T \subset \Rnonneg$ is a time-interval, $\x: T \to \Xcal$ is the state-trajectory, and $\u: T \to \Ucal$ is a control-trajectory, satisfying:
    \eqn{
        \dot \x(t) = f(t, \x(t), \u(t)), \quad \forall{} t \in T.
    }
\end{definition}

\begin{definition}[Communication Network]
    \label{def:undirected_graph}
    Let $\Gcal (t) = (\Acal(t) , E(t))$ be an undirected, time-varying graph, with nodes representing agents and edges representing communication links, such that:
    \eqn{
        E(t) = \set{(i,j) \in \Acal(t) : \norm{\p_i(t) - \p_j(t)}_2 \leq \Rcomm}.
    }
\end{definition}

\begin{assumption}\label{assumption:instantaneous_comms}
    Communication is instantaneous and range-limited, but not bandwidth-limited.
    That is, if two agents $i, j$ are neighbors at some time $t$, they can communicate an arbitrary amount of information instantaneously. 
\end{assumption}

\noindent Let
\eqn{
    \Ncal_i(t) = \{ j \in \Acal(t) : (i,j) \in E(t) \}
}
denote the neighbors of agent $i$ at time $t$.

\subsection{Problem Statement}
We now formally state the problem of provably safe multi-agent motion planning under range-limited communication.
\begin{problem}\label{problem:one}
    Let $\Acal(t) \subset \mathbb{N} $ denote a time-varying set of agents satisfying \cref{def:agent}, that communicate over a connectivity graph $\Gcal(t)$ as defined by \cref{def:undirected_graph}.
    Let $\Wcal_\text{obs} \subset \Wcal$ be the subset of the domain occupied by static obstacles, and $\Scal \subset \Wcal \setminus \Wcal_{\text{obs}}$ be the safe set, known to all agents.
   Agents attempt to navigate to their goal state $g_i \in \Xcal$,
    \eqn{
         \lim_{t \to \infty} x_i(t) = g_i, \quad \forall i \in \Acal,
    }
    but must satisfy collision avoidance and safety constraints:
    \eqn{
        \p_i(t) \in \Scal, && \quad \forall t \in \Rnonneg,\\
        \norm{\p_i(t) - \p_j(t)}_2 \geq \delta, &&\forall j \in \Acal(t) \setminus \set{i}, \forall t \in \Rnonneg.
    }
    We assume $\Acal(t_0)$ is initially empty, and agents are removed from consideration upon reaching their goal state. 
\end{problem}
To solve \cref{problem:one}, we propose R3R, a decentralized framework for multi-agent motion planning under communication constraints.

\section{Proposed Approach and Solution}\label{sec:approach}
We construct our proposed approach in three stages. In \cref{subsec:gatekeeper_preliminaries}, we introduce our prior work for single-agent safety \gatekeeper{}~\cite{agrawalGatekeeperOnlineSafety2024}. In \cref{subsec:r_boundedness}, we introduce a novel geometric constraint relating agents' ability to communicate to their ability to plan safely, which we show enables system-wide safety certification using only local information. Finally, in \cref{subsec:decentralized_safety_protocol}, we integrate these two systems into a provably safe, decentralized algorithm to solve \cref{problem:one}. An overview of the proposed architecture is shown in \cref{fig:r3r_principle_diagram}.

\begin{figure}
    \centering
    \includegraphics[width=0.35\textwidth]{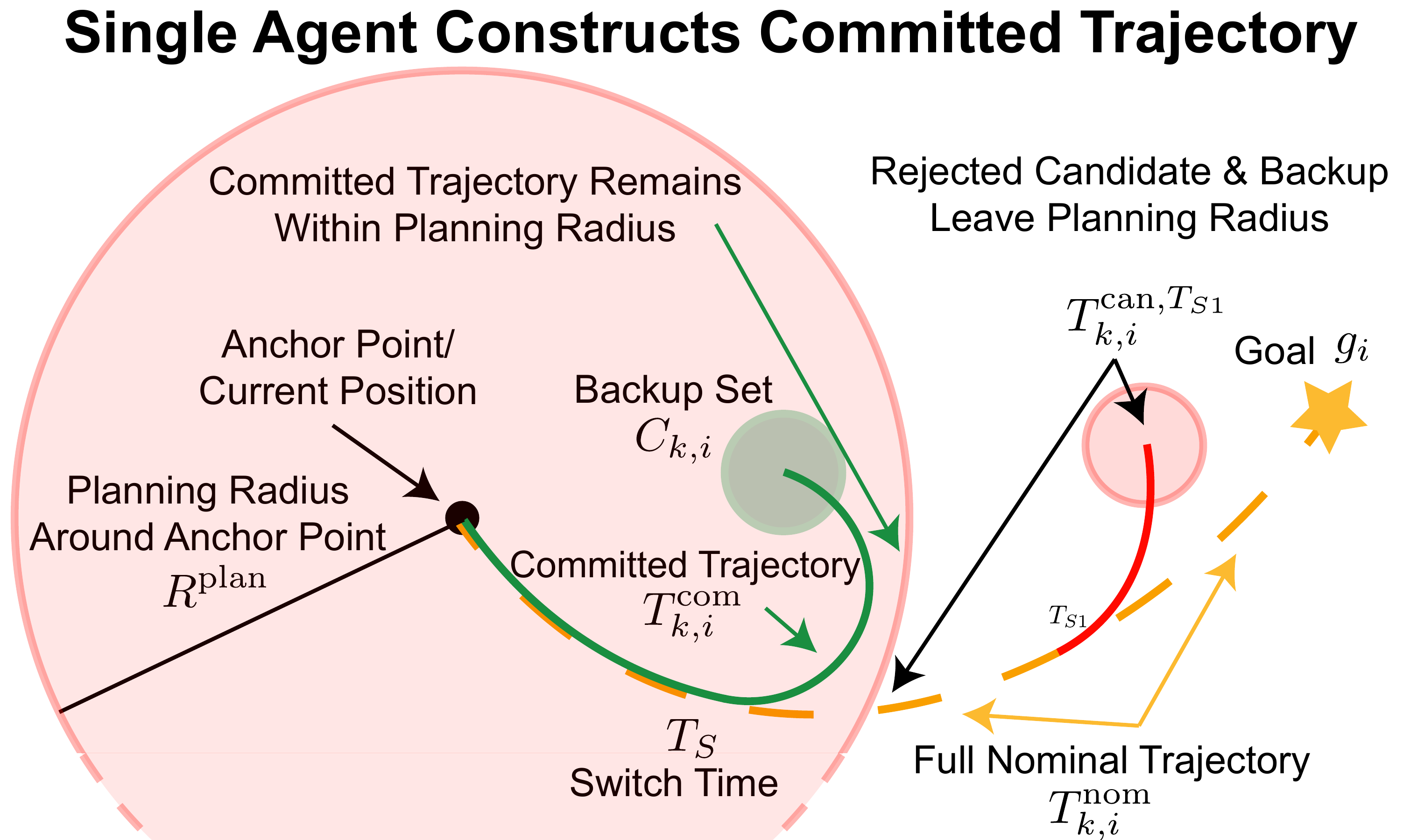}
    \vspace{-0.25cm}
    \caption{\textbf{Constructing a Safe, Committed Trajectory.} An agent at the anchor point plans a nominal trajectory (orange) toward its goal. To ensure safety, it must find a candidate trajectory that remains entirely within its planning radius $\Rplan$ (red circle). An unsafe candidate that exits this radius (red) is rejected. A safe candidate that stays within the radius (green) is verified and becomes the committed trajectory for the agent to follow.}
    \label{fig:r3r_principle_diagram}
\end{figure}

\begin{figure*}[!ht]
    \centering
    \includegraphics[width=1.0\textwidth]{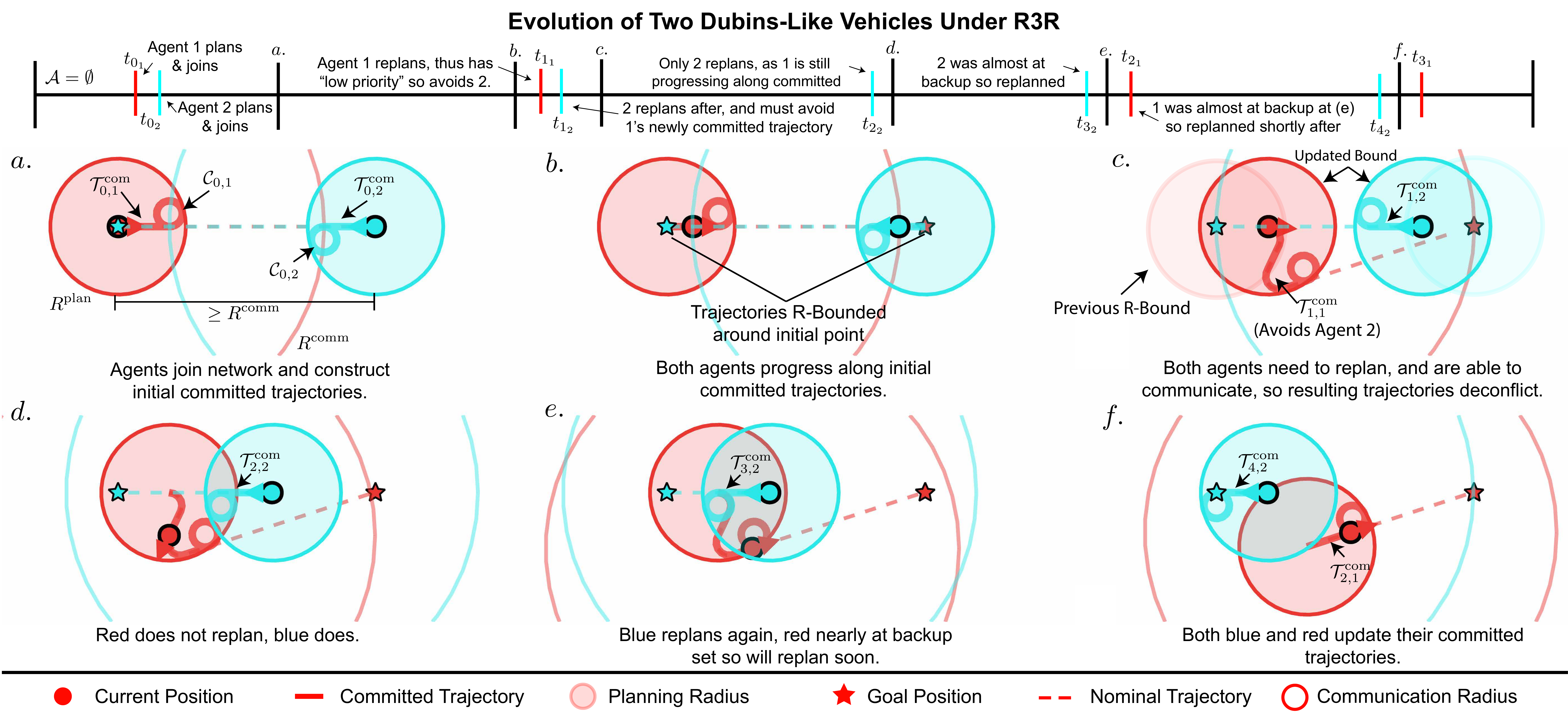}
    \caption{
    \textbf{Demonstration of R3R over time for two Dubins agents}. The timeline indicates when each agent (Red: Agent 1, Blue: Agent 2) plans or replans.
    \emph{Committed trajectories} (solid lines) are certified to be safe for all time, and remain within a planning radius $\Rplan$ (filled circles), around where they are constructed. 
    Agents first plan goal-oriented nominal trajectories (dashed lines). 
    A portion of this nominal is combined with a backup trajectory, to form a candidate. If this candidate is deemed safe, it becomes a committed trajectory. 
    At times $t_{0,1} \neq t_{0,2}$ agents $1$ and $2$ wish to join the network, then independently construct a valid committed trajectory, which is then simultaneously committed and broadcast as the agent enters the network.
    \textbf{(a.)} The initial committed trajectories of both agents remain within the planning radius.
    \textbf{(b.)} Agents continue to track their initial committed trajectories, and enter communication range. 
    \textbf{(c.)} Agent $1$ replanned first, so constructs a new candidate trajectory, which it certifies as safe with respect to the current committed trajectory of agent $2$. At time $t_{1_2} > t_{1_1}$, agent $2$ chooses to replan, and must now avoid the recently committed trajectory of agent $1$, which reflects the asynchronous nature of the algorithm; at subsequent replanning steps, agents may not invalidate previously certified decisions of neighbors. 
    \textbf{(d-f.)} The system continues to evolve, and agents continue to asynchronously replan.
    }
    \label{fig:r3r_evolution}
    \vspace{-0.5cm}
\end{figure*}

\subsection{\gatekeeper{} Preliminaries}\label{subsec:gatekeeper_preliminaries}

R3R builds upon our prior work for single-agent safety \gatekeeper{} \cite{agrawalGatekeeperOnlineSafety2024}, which is executed independently by each agent $i \in \Acal$ at discrete, asynchronous iterations indexed $k_i \in \naturals$ at times $t_{k_i}$, such that $t_{k_i} < t_{k_i + 1}$.
We ensure agents always track a \emph{committed trajectory} that is known to be safe for all future time.
At each iteration, agents attempt to construct a new, provably safe, \emph{committed trajectory}, by properly appending a goal-oriented \emph{nominal trajectory} with a safe, infinite-horizon \emph{backup trajectory}.
This becomes a \emph{candidate trajectory}, which must be validated as safe prior to being committed. 

\begin{definition}[Nominal Trajectory]\label{def:nominal_trajectory}
    Given state $\x_i(t_{k_i}) \in \Xcal$, at some time $t_{k_i} \in \Rnonneg$, the \textbf{nominal trajectory} of agent $i$, denoted
    \eqn{
        (T_{k_i,i}^\nom, \x_{k_i,i}^\nom, \u_{k_i,i}^\nom),
    }
    is defined over the horizon $T_{k_i,i}^\nom = [t_{k_i}, t_{k_i} + \Delta t_H]$, where $\Delta t_H \in \Rplus$ is the planning horizon.
\end{definition}
\noindent A nominal trajectory is not guaranteed to satisfy constraints nor provide any guarantees beyond $\Delta t_H$.

\begin{definition}[Backup Set]\label{def:backup_set}
    A set $\Ccal \subset \Xcal$ is a \textbf{backup set} if $\Pi(\Ccal) \subset \Scal$, and
    there exists a controller $\pi^B : \R \times \Xcal \to \Ucal$ such that $\forall t_0 \in \Rnonneg$, the closed-loop system $\dot x = f(t, x, \pi^B(t, x))$ satisfies:
    \eqn{
        \x(t_0) \in \Ccal \implies \x(t) \in \Ccal, \forall t \geq t_0.
    }
\end{definition}

\begin{definition}[Backup Trajectory]\label{def:backup_trajectory}
    Let $\Delta t_B \in \Rplus$. For any $t_0 \in \Rplus$, $x_0 \in \Xcal$, and backup set $\Ccal$, a trajectory $(T^\back, \x^\back, \u^\back)$, defined on $T^\back = [t_0, \infty)$ is a \textbf{backup trajectory} if:
    \eqn{
        \x^\back(t_0) &= x_0,\\
        \p^\back(t) &\in \Scal, \; \forall t \in [t_0, t_0 + \Delta t_B]\\
        \x^\back(t_0 + \Delta t_B) &\in \Ccal,
    }
    and for all $t \geq t_0 + \Delta t_B$, the system evolves under the backup controller
    \eqn{
        \u^\back(t) = \pi^B(t, \x^\back(t)).
    }
\end{definition}

\noindent A backup trajectory takes a system from a state $x_0\in\Xcal$ at time $t_0$ such that at time $t_0 + \Delta t_B$, the system lies in a backup set. 
The system then executes the backup controller for all future time, thus remaining in the backup set.
If the finite horizon trajectory defined over $[t_0, t_0 + \Delta t_B]$ safely takes the agent into the backup set, then by the properties of the backup set we guarantee the agent will remain in $\Scal$ over $[t_0, \infty)$. 

\begin{definition}[Candidate Trajectory]\label{def:candidate_trajectory}
    At time $t_{k_i} \in \Rnonneg$, let agent $i \in \Acal(t_{k_i})$ be at state $\x_i(t_{k_i}) \in \Xcal$. 
    Let the agent's nominal trajectory be $\Tcal_{k_i,i}^\nom = ([t_{k_i}, t_{k_i} + \Delta t_H], \x_{{k_i},i}^\nom, \u_{k_i,i}^\nom)$.
    A \textbf{candidate trajectory} with switch time $\Delta t_S \in [0, \Delta t_H]$, denoted $\Tcal_{k_i,i}^{\can,\Delta t_S} = ([t_{k_i}, \infty), \x_{k_i,i}^\can, \u_{k_i,i}^\can)$, is defined by:
    \eqn{
        \u_{k_i,i}^\can(\tau) = 
        \begin{cases}
            \u_{k_i,i}^\nom(\tau) & \text{if } \tau \in [t_{k_i}, t_{k_i} + \Delta t_S)\\
            \u_{k_i,i}^\back(\tau) & \text{if } \tau \geq t_{k_i} + \Delta t_S
        \end{cases},
    }
    where $([t_{k_i} + \Delta t_S, \infty), \x_{k_i,i}^\back, \u_{k_i,i}^\back)$ is a backup trajectory as defined in \cref{def:backup_trajectory}, and $\x_{k_i,i}^\can(t)$ is the corresponding state trajectory.
\end{definition}

\subsection{$R$-Boundedness}\label{subsec:r_boundedness}
Our core challenge is to enable system-wide certification of candidate trajectories using local information.
To accomplish this, we introduce a geometric constraint called $R$-Boundedness, which restricts an agent to planning within a known, finite volume.

\begin{definition}[$R$-Bounded]
    A trajectory $\Tcal = ([t_0, \infty), \x(t), \u(t))$ is \textbf{R-Bounded} if:
    \eqn{
        \norm{\p(t) - \p(t_0)}_2 \leq R,\; \forall t \geq t_0.
    }
\end{definition}
\noindent That is, a trajectory is $R$-Bounded if it remains within a ball of radius $R$ around its starting position, its \emph{anchor point}, for all future time.
\begin{lemma}[Collision of Bounded Trajectories]
    \label{lemma:collision_r_bounded}
    Consider two $R$-Bounded trajectories $([t_1, \infty), \x_1, \u_1),\; ([t_2, \infty), \x_2, \u_2)$, and suppose the system has an inter-agent collision radius of $\delta < R$. Without loss of generality, let $t_1 \leq t_2$.
    If
    \begin{equation}
        \norm{\p_1(t_1) - \p_2(t_2)}_2 \geq 2R + \delta,
    \end{equation}
    then
    \eqn{
        \norm{\p_1(t) - \p_2(t)}_2 \geq \delta, \forall t \geq t_2.
    }
\end{lemma}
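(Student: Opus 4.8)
The plan is to reduce the claim to a single application of the triangle inequality, exploiting the fact that $R$-boundedness confines each trajectory to a fixed ball around its anchor point. First I would fix an arbitrary $t \geq t_2$. Since $t_1 \leq t_2 \leq t$ and both trajectories are defined on $[t_i,\infty)$, the positions $p_1(t)$ and $p_2(t)$ are both well-defined, so the quantity $\norm{p_1(t) - p_2(t)}$ makes sense. This observation — that we only claim the bound for $t \geq t_2$, i.e.\ once \emph{both} trajectories have been anchored — is the one subtlety to flag, and it is exactly why the hypothesis takes $t_1 \leq t_2$ without loss of generality.

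Next I would invoke the $R$-boundedness of each trajectory: by definition, $\norm{p_1(t) - p_1(t_1)} \leq R$ and $\norm{p_2(t) - p_2(t_2)} \leq R$ for all $t$ in the respective domains, hence in particular at the chosen $t$. Then I would chain the reverse triangle inequality:
\begin{equation}
    \norm{p_1(t) - p_2(t)} \;\geq\; \norm{p_1(t_1) - p_2(t_2)} - \norm{p_1(t) - p_1(t_1)} - \norm{p_2(t) - p_2(t_2)}.
\end{equation}
Substituting the hypothesis $\norm{p_1(t_1) - p_2(t_2)} \geq 2R + \delta$ and the two bounds of $R$ gives $\norm{p_1(t) - p_2(t)} \geq (2R + \delta) - R - R = \delta$, which is the desired conclusion; since $t \geq t_2$ was arbitrary, the bound holds for all such $t$.

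I do not anticipate a genuine obstacle here — the argument is a two-line geometric estimate. The only things to be careful about are bookkeeping: ensuring the trajectories are simultaneously defined on $[t_2,\infty)$ (handled by the WLOG ordering), and stating the reverse triangle inequality in the three-term form correctly. The hypothesis $\delta < R$ is not actually needed for this particular bound, so I would either omit reliance on it or remark that it is retained for consistency with how the lemma is used downstream (where the separation margin must be strictly smaller than the confinement radius for the planning-radius/communication-radius bookkeeping to close).
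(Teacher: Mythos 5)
Your argument is correct and is essentially the same as the paper's: the paper packages the two $R$-bounds as containment in the balls $\Bcal(p_1(t_1),R)$ and $\Bcal(p_2(t_2),R)$ and lower-bounds the inter-agent distance by $\dist(B_1,B_2) = \norm{p_1(t_1)-p_2(t_2)} - 2R \geq \delta$, which is exactly your three-term triangle-inequality chain. Your side remarks (that the WLOG ordering is only needed so both trajectories are defined on $[t_2,\infty)$, and that $\delta < R$ is not actually used) are accurate.
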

\begin{proof}
    Since trajectories are $R$-Bounded, $\p_1(t) \in B_1 = \Bcal_R(\p_1(t_1)), \; \p_2(t) \in B_2 = \Bcal_R(\p_2(t_2))$, $\forall t \geq t_2$.
    Then, 
    \neqn{
        \min_{t\geq t_2} \norm{\p_1(t) - \p_2(t)}_2 &\geq \min_{p_1 \in B_1, \; p_2 \in B_2} \norm{p_1 - p_2}_2\\
        &= \norm{\p_1(t_1) - \p_2(t_2)}_2 - 2R \geq \delta.
    }
    Hence, the trajectories are collision-free. 
\end{proof}

While \cref{lemma:collision_r_bounded} establishes safety based on the trajectory anchor points, we must translate this into a condition based on the agents' current positions at the moment of replanning.

\begin{lemma}[Sufficient Communication Radius]\label{lemma:time_of_committed}
    Consider two $R$-Bounded trajectories $([t_1, \infty), \x_1, \u_1),\; ([t_2, \infty), \x_2, \u_2)$, with starting times $t_1 < t_2$ for some radius $R$. If at some $t \geq t_2$ the agents collide,
    \eqn{
        \norm{\p_2(t) - \p_1(t)}_2 < \delta,
    }
    then they must have been within $3R + \delta$ at time $t_2$, i.e.,
    \eqn{
        \norm{\p_2(t_2) - \p_1(t_2)}_2 < 3R + \delta.
    } 
    Thus, if trajectories are bounded by a radius $\Rplan$ satisfying
    \eqn{\label{eqn:r3r}
        \Rcomm = 3\Rplan + \delta,
    }
    which we term the R3R condition, then agents $1$ and $2$ must be neighbors at $t_2$, i.e., $(1, 2) \in E(t_2)$.
\end{lemma}
\begin{proof}
    The contrapositive of \cref{lemma:collision_r_bounded} gives
    \eqn{
        \norm{\p_2(t) - \p_1(t)}_2 < \delta \Rightarrow \norm{\p_2(t_2) - \p_1(t_1)}_2 < 2R + \delta.
    }
    By the definition of $R$-Boundedness, we have
    \eqn{
        \norm{\p_1(t_2) - \p_1(t_1)}_2 \leq R,
    }
    so by the triangle inequality,
    \neqn{
        \norm{\p_2(t_2) - \p_1(t_2)}_2
            & \leq \norm{\p_2(t_2) - \p_1(t_1)}_2 + \norm{\p_1(t_2) - \p_1(t_1)}_2 \\
            &< 2R + \delta + R = 3R + \delta.
    }
    By letting $R = \Rplan$ and applying the condition $\Rcomm = 3\Rplan + \delta$, we have
    \neqn{
        \norm{\p_2(t_2) - \p_1(t_2)}_2 &\le 3\Rplan + \delta,\\
                                       &= \Rcomm.
    }
    And therefore, by \cref{def:undirected_graph}, $(1, 2) \in E(t_2)$.
\end{proof}
This extra $R$ relative to \cref{lemma:collision_r_bounded} arises from the fact that agent $1$ may move up to a distance $R$ between $t_1$ and $t_2$, and results in Eq.~\eqref{eqn:r3r}, the R3R condition. This informs a constraint on trajectories which, if satisfied by all agents in the system, enables global safety certification with local information. 

\subsection{Decentralized Safety Protocol}\label{subsec:decentralized_safety_protocol}
We now integrate this constraint with \gatekeeper{} to establish a formally safe decentralized protocol.
We use \gatekeeper{} as the building block for safety, and leverage $R$-Boundedness to enable local safety verification of candidate trajectories.

When an agent $i$ joins the network, which we term iteration $0_i$, the agent is required to have a valid candidate. At later iterations $k_i > 0$, agent $i$ will:
\begin{enumerate}
    \item Receive the current committed trajectories $\Tcal_{k_j,j}^\com$ of all neighbors $j \in \Ncal_i(t_{k_i})$.
    \item Plan a nominal trajectory $\Tcal_{k_i, i}^\nom$, attempting to avoid static obstacles and the committed trajectories of neighbors.
    \item Construct candidates using \cref{alg:attempt_replan} until one is found which satisfies the validity conditions of Def.~\ref{def:valid}, or the algorithm returns.
    \item If a valid candidate is found, commit the trajectory, otherwise continue along $\Tcal_{k_i-1,i}^\com$.
\end{enumerate}

% \newpage
\begin{remark}\label{remark:triggered_iterations}
No conditions are placed on the frequency of replanning iterations which can be triggered independently and asynchronously by agents based on internal logic. 
\end{remark}

We now formalize the conditions under which a candidate trajectory can be certified as safe.
\begin{definition}[Valid Candidate Trajectory] \label{def:valid}
At time $t_{k_i}$, let agent $i$ hold a previously committed trajectory $\Tcal^{\com}_{k_i-1, i}$. A candidate trajectory $\Tcal^{\can, \Delta t_S}_{k_i, i}$ is \textit{valid} if it satisfies:
\begin{enumerate}
    \item \textbf{Safe Set:} $\p_{k_i, i}^{\can, \Delta t_S}(t) \in \Scal, \forall t \in [t_{k_i}, t_{k_i, B}]$.
    \item \textbf{Backup Reachability:} 
    \eqn{
        \x_{k_i, i}^{\can, \Delta t_S}(t_{k_i, B}) \in \Ccal_{k_i, i},
    }
    where $t_{k_i,B} = t_{k_i} + \Delta t_S + \Delta t_B$.
    \item \textbf{Backup Separation:}
    \eqn{\label{eqn:backup_separation}
        \inf_{\substack{x_1\in \Ccal_{k_i,i} \\ x_2 \in \Ccal_{k_j,j}}} \norm{\Pi(x_1) - \Pi(x_2)}_2 \ge \delta, \; \forall j \in \Ncal_{i}(t_{k_i}).
    }
    \item \textbf{R-Boundedness:} 
    \neqn{
        \norm{\p_{k_i, i}^{\can, \Delta t_S}(t) - \p_i(t_{k_i})}_2 &\le \Rplan, \forall t \in [t_{k_i}, t_{k_i, B}],\\
        \sup_{x \in \Ccal_{k_i,i}} \norm{\Pi(x) - \p_i(t_{k_i})}_2 &\le \Rplan. 
    } 
    \item \textbf{Collision Avoidance:} $\forall t \in [t_{k_i}, \max\set{t_{k_i,B}, t_{k_j,B}}]$,
    \eqn{\label{eqn:collision_avoidance}
        \norm{\p_{k_i, i}^{\can, \Delta t_S}(t) - \p_{k_j,j}^\com(t)}_2 \ge \delta,\; \forall j \in \Ncal_i(t_{k_i}).
    }
\end{enumerate}
\end{definition}
\begin{remark}
The conditions given by \eqref{eqn:backup_separation} and \eqref{eqn:collision_avoidance} together provide an infinite-horizon safety guarantee.
Condition \eqref{eqn:collision_avoidance} ensures finite-horizon collision avoidance while the agents are actively traversing their trajectories, and condition \eqref{eqn:backup_separation} ensures that once both agents reach their respective backup sets, they will remain permanently separated for all future time.
\end{remark}

We can now establish the central pillar of our safety argument: if every agent executes a trajectory that meets these local validity conditions, then the entire multi-agent system is provably safe for all future time. This is because the $R$-Boundedness condition geometrically precludes collisions with agents outside of the local communication radius.

\begin{theorem}[R3R - Safety of Valid Trajectories]\label{theorem:no_update_safety}
    If all agents $i \in \Acal(t_k)$ choose $\Rplan$ satisfying the R3R condition \eqref{eqn:r3r}, and execute trajectories $\Tcal_{k_i,i}^\com$ that were valid by Def.~\ref{def:valid} when committed at times $t_{k_i} < t_k$, then $\forall t \geq t_k$, agents executing these committed trajectories remain safe: 
    \eqn{
        \norm{\p_i(t) - \p_j(t)}_2 \geq \delta, & \quad \forall i,j \in \Acal, i\neq j,\\
        \p_i(t) \in \Scal, & \quad \forall i \in \Acal.
    }
\end{theorem}
\begin{proof}
    \emph{Collision avoidance.}
    Seeking contradiction, assume that at some $t \geq t_k$, distinct agents $i,j \in \Acal$ collide, i.e., $\exists t\geq t_k$ such that $\norm{\p_i(t) - \p_j(t)}_2 < \delta$. 
    Without loss of generality, let $t_{k_i} < t_{k_j}$. 
    Both trajectories are valid by \cref{def:valid}, and thus $R$-Bounded by $\Rplan$.
    Applying \cref{lemma:time_of_committed}, the condition $\norm{\p_i(t) - \p_j(t)}_2 < \delta$ implies that at time $t_{k_j}$, the agents' positions must have satisfied $\norm{\p_i(t_{k_j}) - \p_j(t_{k_j})}_2 < 3\Rplan + \delta$, which per the R3R condition \eqref{eqn:r3r} is precisely $\Rcomm$, thus $i \in \Ncal_j(t_{k_j})$.
    This is a contradiction, since by \cref{def:valid}, agent $j$'s trajectory $\p_j(t)$ being valid at $t_{k_j}$
    requires that it is collision-free with respect to the trajectories of all agents in $\Ncal_j(t_{k_j})$, and hence collision-free with respect to the trajectory $\p_i(t)$ of agent $i\in \Ncal_j(t_{k_j})$, $\forall t\geq t_{k_j}$.
    
    \emph{Safe set.} 
    A valid trajectory satisfies $\p_i(t) \in \Scal, \forall t \in [t_{k_i}, t_{k_i,B}]$, and $\x_i(t) \in \Ccal_i, \forall t \geq t_{k_i,B}$, which by \cref{def:backup_set}, implies $\Pi(\Ccal_i) \subset \Scal$, and thus $\p_i(t) \in \Scal, \forall t \geq t_{k_i}$.
\end{proof}

\Cref{theorem:no_update_safety} shows that the conditions outlined in Def.~\ref{def:valid}, which rely only on local information, still provide guarantees with respect to the rest of the agents in the network.
However, this theorem assumes the committed trajectories of each agent are held constant, and thus remain $R$-Bounded around the same anchor point.
In order to make progress, agents must continually update their committed trajectories.
We now introduce the necessary machinery to ensure all agents in the network always track valid trajectories, guaranteeing the conditions of \cref{theorem:no_update_safety} are satisfied after every iteration of \gatekeeper{}.

\begin{definition}[Committed Trajectory]\label{def:committed_trajectory}
    At the $k_i$-th iteration, let agent $i \in \Acal(t_{k_i})$ be at state $\x_i(t_{k_i}) \in \Xcal$, and $\Tcal_{k_i,i}^\can$ be the candidate trajectory.
    Define the set of valid switch times:
    \eqn{
        \Ical_{k_i} = \{ \Delta t_S \in [0, \Delta t_H] : \Tcal_{k_i,i}^\can  \text{ is valid.}\},
    } 
    where validity is defined by \cref{def:valid}.
    If $\Ical_{k_i} \neq \emptyset$, let $\Delta t_S^* = \max\;\Ical_{k_i}$, then $\Tcal_{k_i,i}^\com = \Tcal_{k_i,i}^{\can,\Delta t_S^*}$.
    Otherwise, let $\Tcal_{k_i,i}^\com = \Tcal_{k_i-1,i}^\com$ (i.e. continue tracking the previously committed trajectory).
\end{definition}

To handle the asynchronous nature of a decentralized system, where multiple agents may replan at different times, we must introduce an assumption to prevent conflicts where two neighbors invalidate each other's plans simultaneously.

\begin{assumption}\label{assumption:no_simultaneous_replanning}
    No two agents which can communicate update their committed trajectories simultaneously, i.e.,
    \neqn{
        t_k = t_{k_i} = t_{k_j} \implies (i,j) \notin E(t_k).
    } 
\end{assumption}
\begin{remark}
    This assumption is necessary to ensure replanning agents have accurate information about their neighbors \cite{vielmettiMultiAgentGatekeeperSafe2026}, preventing agents from simultaneously constructing mutually invalidating candidates.
    Under the conditions of \cref{assumption:instantaneous_comms}, this is a mild assumption, however we acknowledge handling the effects of non-ideal network communication as a direction for future work.
    An approach to relaxing this assumption is explored in \cite{kondoRobustMADERDecentralized2023}, using a check-recheck scheme to ensure agents have sufficiently up-to-date information about their neighbors.
\end{remark}

The algorithmic steps of checking validity by \cref{def:valid}, and constructing committed trajectories are shown in \cref{alg:attempt_replan,alg:update_state}.

\begin{algorithm}
\caption{AttemptReplan}
\label{alg:attempt_replan}
\KwIn{$x_i, g_i, \Ncal^{\com}, \Scal$}
\KwResult{\textit{success} (bool), $\Tcal^{\can,\Delta t_S}$}
\DontPrintSemicolon

$\Tcal^{\nom} \leftarrow \text{PlanNominal}(x_i, g_i, \Ncal^{\com}, \Scal)$\;

\For{$\Delta t_S = \Delta t_H, \; \Delta t_H - \Delta t, \dots, 0$}{

    $\Tcal^{\can,\Delta t_S} \leftarrow \text{ConstructCandidate}(\Tcal^{\nom}, \Delta t_S)$\;
    
    \If{$\text{IsValid}(\Tcal^{\can,\Delta t_S}, \Ncal^{\com}, \Scal)$}{
    
        \KwRet{True, $\Tcal^{\can,\Delta t_S}$}\;
    }
}
\KwRet{False, $\emptyset$}\;
\end{algorithm}
\begin{remark}
    At each iteration, when constructing candidate trajectories, agents must choose some backup set $\Ccal_{k_i}$. The same backup set can be used for each candidate, multiple can be evaluated, or its location can be optimized as part of the planning process. Agents include and transmit a representation of this backup set alongside their committed trajectories.
\end{remark}

\begin{algorithm}
    \caption{UpdateState}
    \label{alg:update_state}
    \SetKwComment{Comment}{// }{}
    % \KwIn{Agent $i$}
    \DontPrintSemicolon
    \Comment{Trigger: Agent $i$ join/replan.}

    \Comment{Receive neighbor trajectories.}
    $\Ncal^\com \leftarrow \set{\Tcal_{k_j,j}^\com}_{j \in \Ncal_i(t_{k_i + 1})}$\;
    
    \KwSty{success}, $\Tcal^{\can} \leftarrow \text{AttemptReplan}(x_i, g_i, \Ncal^\com, \Scal)$\;
    
    \If{\KwSty{success}}{
        $\Tcal_{k_i+1, i}^{\com} \leftarrow \Tcal^{\can}$ \Comment{Commit trajectory.}
        
        \If{$i \notin \Acal(t_{k_i + 1})$}{

            $\Acal(t_{k_i+1}) \leftarrow \Acal(t_{k_i+1}) \cup \{i\}$ \Comment{Join.}
        }
        \text{BroadcastNewCommitted}$(\Tcal_{k_i+1,i}^\com)$
    }
    \ElseIf{$i \in \Acal(t_{k_i + 1})$}{
        $\Tcal_{k_i+1, i}^{\com} \leftarrow \Tcal_{k_i, i}^{\com}$ \Comment{Keep old trajectory}
    }
\end{algorithm}

Finally, we combine these elements into a single theorem that proves the forward-invariant safety of the entire system.
This theorem asserts that as long as agents join the network and update their trajectories according to the specified rules, every agent will possess a valid committed trajectory at all times, thereby guaranteeing perpetual system-wide safety.

\begin{theorem}[Forward-Invariant Safety via Update Protocols]
\label{theorem:forward_invariant_safety}
Assume $\Acal(t_0) = \emptyset$. If all agents satisfy \eqref{eqn:r3r}, and join and replan by \cref{alg:update_state}, then for all $t \ge t_0$, the system remains safe:
\eqn{
    \norm{\p_i(t) - \p_j(t)}_2 \ge \delta, \quad &\forall i \neq j \in \Acal(t),\\
    \p_i(t) \in \Scal, \quad &\forall i \in \Acal(t).
}
\end{theorem}
\begin{proof}
    We prove safety by induction over update events (joining or replanning) generated by \cref{alg:update_state}.

    \emph{Base case.} At time $t_0$, the active set is empty, $\Acal(t_0)=\emptyset$, hence the safety conditions hold trivially. When the first agent $i$ joins at time $t_{0_i}$, \cref{alg:update_state} requires it to commit a trajectory $\Tcal_{0_i,i}^{\com}$ satisfying \cref{def:valid}. In particular, $\p_i(t)\in\Scal$ for all $t\ge t_{0_i}$, and collision avoidance is vacuous since no other agents are present.

    \emph{Inductive hypothesis.}
    Assume that at some update time $t_k$, every agent $i\in\Acal(t_k)$ is executing a committed trajectory $\Tcal_{k_i,i}^{\com}$ that was valid at $t_{k_i}$. Then, by \cref{theorem:no_update_safety}, the system is safe for all $t\ge t_k$.

    \emph{Inductive step.}
    Consider the next event at time $t_{k+1}>t_k$.

    \emph{Case 1: Agent joining.}
    If a new agent $i\notin\Acal(t_k)$ joins at $t_{k+1}$, then by
    \cref{alg:update_state} it may enter only after committing a valid trajectory $\Tcal_{0_i,i}^{\com}$ with respect to the currently committed trajectories of all $j \in \Ncal_i(t_{k+1})$, and safety follows from \cref{theorem:no_update_safety}.

    \emph{Case 2: Agent replanning.}
    If an agent $i\in\Acal(t_k)$ replans at $t_{k+1}$, then either:

    (i) \emph{AttemptReplan} fails, in which case agent $i$ continues executing its previous committed trajectory, and the system remains safe by the inductive hypothesis; or

    (ii) \emph{AttemptReplan} succeeds, in which case agent $i$ commits a new trajectory $\Tcal_{k_i+1,i}^{\com}$ that satisfies \cref{def:valid}. Thus, safety again follows from \cref{theorem:no_update_safety}.

    Thus, safety is preserved across every update event. Since $t_k\to\infty$, the system remains safe for all $t\ge t_0$.
\end{proof}
An example of a system evolving under this protocol is given in Table~\ref{fig:r3r_evolution}. We see by repeatedly replanning, agents are able to shift their anchor points and certify trajectories that enable them to safely progress towards their goal states.
\begin{remark}
    If an agent elects to replan at times $t_{k_i + 1} < t_{k_i} + \Delta t_S$ at each iteration, and is able to successfully construct a valid committed trajectory each time, it will never execute its backup controller. Therefore, an agent could avoid ever deviating from a nominal, goal-oriented trajectory, unless necessary to guarantee safety.
\end{remark}

\section{Simulation Results and Analysis}\label{sec:sims}

As a case study, we consider 2D curvature-constrained agents operating under the Dubins vehicle dynamics
\eqn{
\begin{bmatrix}
    \dot{x} \\
    \dot{y} \\
    \dot{\theta}
\end{bmatrix} = 
\begin{bmatrix}
    v \cos \theta \\
    v \sin \theta \\
    \omega
\end{bmatrix},
}
where $v$ is a fixed velocity, and $\omega$ is the angular velocity input, which is bounded by $|\omega| \leq \omega_\text{max}$ \cite{dubins_curves_1957}.
This system is representative of a fixed-winged aircraft operating at a constant velocity.
All simulations were performed in \texttt{Julia}, using Agents.jl \cite{datseris_agentsjl_2024}, running on a 2022 MacBook Air (Apple M2, 16 GB).

The nominal planner is a Dynamic-RRT*\footnote{Implementation: \url{https://github.com/MarshallVielmetti/DynamicRRT.jl}}, an RRT* planner based on Dubins motion primitives that plans dynamically feasible trajectories, while avoiding static obstacles and other agents.
Committed trajectories are constructed by maximizing the time for which the agent's nominal can be tracked before switching to the backup controller.
The backup controller for the Dubins dynamics was implemented as a time-parameterized loiter circle, the location of which is determined as part of constructing the candidate trajectory.

\begin{table}[h!]
    \centering
    \caption{Simulation results of R3R \& \gatekeeper{} approach.}
    \vspace{-0.15cm}
    \begin{tabular}{ccc}
        \hline
         \textbf{\#Agents - Environment} & \textbf{Safety} & \textbf{Success} \\
        \hline
         8-swap & 100\% & 100\% \\
         16-swap & 100\% & 100\% \\
         8-city-like & 100\% &  100\% \\
         16-city-like & 100\% & 100\% \\
         32-city-like & 100\% & 100\% \\
         64-city-like & 100\% & 100\% \\
         128-city-like & 100\% & 97\% \\
         8-willow-garage & 100\% & 100\% \\
         16-willow-garage & 100\% & 100\% \\
         \hline
    \end{tabular}
    \label{tab:safety_succ}
    \vspace{-0.15cm}
\end{table}

\begin{figure*}
    \centering
    \includegraphics[width=0.95\linewidth]{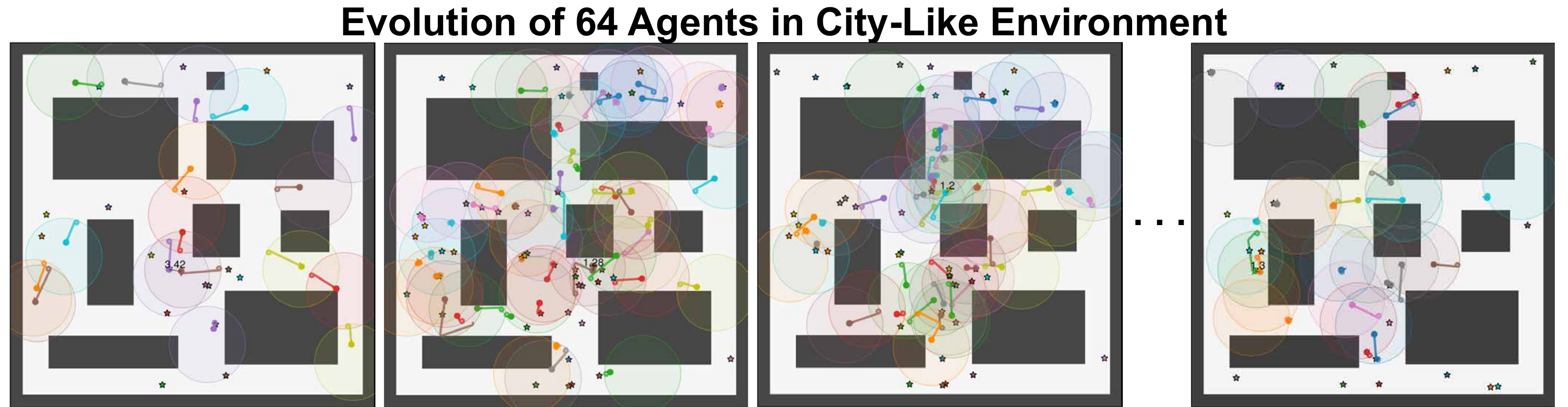}
    \caption{64 Dubins vehicles initialized with random start and goal positions in a city-like environment.}
    \label{fig:simulation_environment}
    \vspace{-0.5cm}
\end{figure*}

Results from the simulations are given in \cref{tab:safety_succ}.
Each scenario was run $5$ times.
For the city-like and willow garage scenarios, agent start and goal positions were randomized in each trial.
Safety is defined as all environmental constraints being satisfied, and no agents coming within inter-agent collision avoidance distance. 
Success was defined as the percent of agents making it to their goal positions.
The simulation parameters for the city-like environment (approximately $100\times100$) were: $\Rcomm = 16.0, \delta = 0.5, \Rplan = 5.16$, and maximum turning radius of $0.25$.  For the $32$-agent simulation, agents averaged $4.7$ neighbors at each replanning iteration, with a maximum of $10$. Replanning steps took an average of $1.4$ ms.

The simulation results validate our theoretical safety guarantees, and the high success rates demonstrate the algorithm's deadlock resilience, even in dense scenarios, an example of which is given in \cref{fig:simulation_environment}.
The only failure case occurred in the $128$-agent city-like scenario when a group of agents were unable to reach their goals within the simulation time limit, due to a combination of high density and limited free space. 
This resulted in a deadlock situation that the current implementation could not resolve. Safety was never violated.
Future work will focus on enhancing deadlock resolution strategies to further improve success rates in such challenging scenarios.

\subsection{Baseline Comparison}
While comparison against state-of-the-art (e.g., MADER~\cite{tordesillas2021mader}, DREAM~\cite{senbaslarDREAMDecentralizedRealTime2025}, and EGO-Swarm~\cite{zhouEGOSwarmFullyAutonomous2020}) is desirable, these methods fundamentally assume fully-connected communication graphs and global information sharing, and adapting them to operate under strict, distance-based communication constraints changes their core behavior. Relative to these methods, our approach sacrifices global optimality for guaranteed constraint satisfaction under partial information.

We compare the performance of our solver to a baseline Nonlinear MPC in Julia using Ipopt~\cite{wachter_implementation_2006} and JuMP~\cite{lubin_jump_2023}, which we demonstrate in an empty environment. Inter-agent collision constraints with respect to neighboring agents are encoded in the optimization problem as hard constraints. If the resulting problem is infeasible, we attempt to solve a relaxed version of the problem, with high penalties for constraint violation.

Due to the effects of the communication radius, the NMPC solver finds itself in unrecoverable states where no safe solution exists, and we see the rate of safety violations increase substantially as problem density increases.

\begin{table}[h!]
    \centering
    \caption{Evaluation of NMPC baseline.}
    \vspace{-0.15cm}
    \begin{tabular}{c|ccc}
        \hline
        \multirow{2}{*}{$n_{\text{agents}}$} & \multicolumn{3}{c}{\textbf{Safety} }\\ 
        & $\Rcomm = 1.0$ & $\Rcomm = 2.0$ & $\Rcomm = 10.0$\\ 
        \hline
        4  & 87.5\% & 100.0\% & 100.0\%\\
        6  & 87.5\% & 100.0\% & 100.0\%\\
        8  & 75.0\% & 87.5\%  & 100.0\%\\
        10 & 62.5\% & 87.5\% & 100.0\%\\
        14 & 50.0\% & 75.0\% & 100.0\%\\
        16 & 37.5\% & 37.5\% & 75.0\%\\ 
        \hline
    \end{tabular}
    \label{tab:nmpc_success}
    \vspace{-0.15cm}
\end{table}

\subsection{Runtime Analysis}
Let $N$ be the total number of agents, $A$ be the environment area, $\rho = N/A$ be the agent density, and $\Rcomm$ be the communication radius.
Assuming agents are distributed uniformly, at some replanning iteration, the average number of neighbors within communication range is $\lambda =  \rho \pi (\Rcomm)^2$.
All online costs depend on $\lambda$ (local density), not $N$, provided $\rho$ stays bounded as $N$ grows.
Thus, collision checking in nominal planning and candidate validation is $\Ocal(\lambda)$, not $\Ocal(N)$ as with centralized or fully-connected approaches.
This analysis is confirmed empirically by the results given in \cref{fig:planner_performance}.
\begin{figure}
    \centering
    \includegraphics[width=0.9\linewidth]{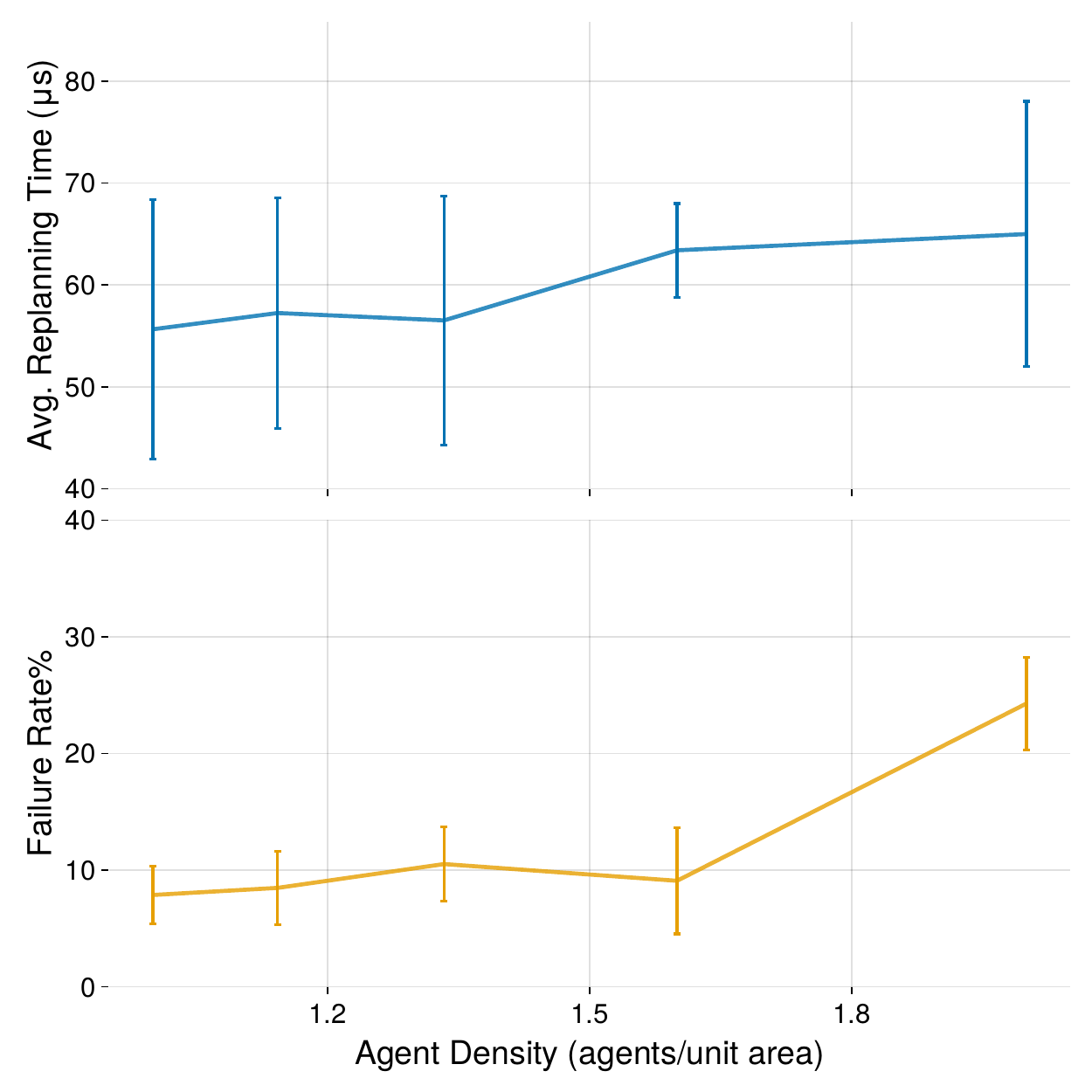}
    \vspace{-0.4cm}
    \caption{Runtime performance of R3R as a function of agent density for $N$ agents in a variable-sized environment over 20 randomized trials. \emph{(Top)} The average time per replanning attempt increases with density due to a higher number of neighboring agents to consider in collision checks. \emph{(Bottom)} The rate of replanning failures (no valid candidates found) also increases with density as it becomes more difficult to find a valid trajectory. }
    \label{fig:planner_performance}
\end{figure}

\vspace{-1.0em}
\section{Conclusion}
\label{sec:conclusions}
This paper presented R3R, a novel framework that establishes a formal, geometric link between an agent’s communication range and the bounds within which it can plan its trajectories.
Specifically, we introduced $R$-Bounded trajectories, demonstrating that restricting an agent's plan to a function of its communication radius and collision avoidance radius geometrically guarantees that local deconfliction implies global safety.
We leverage \gatekeeper{} to ensure that all agents asynchronously construct and track valid, $R$-Bounded trajectories, achieving forward-invariant safety without global information.
Our work demonstrates that for a broad class of systems, scalability and formal safety guarantees are not mutually exclusive goals but can be achieved simultaneously through principled geometric constraints.
The efficacy of the algorithm is demonstrated via the asynchronous deconfliction of >100 agents in obstacle-rich environments.
Our future work will focus on addressing the method's limitations, namely to enhance the framework's robustness, resolve deadlocks, and incorporate models of network uncertainty to provide probabilistic safety guarantees under realistic communication conditions.

% \addtolength{\textheight}{-12cm}   % This command serves to balance the column lengths
                                  % on the last page of the document manually. It shortens
                                  % the textheight of the last page by a suitable amount.
                                  % This command does not take effect until the next page
                                  % so it should come on the page before the last. Make
                                  % sure that you do not shorten the textheight too much.

%%%%%%%%%%%%%%%%%%%%%%%%%%%%%%%%%%%%%%%%%%%%%%%%%%%%%%%%%%%%%%%%%%%%%%%%%%%%%%%%

\vspace{-0.3cm}
\bibliographystyle{IEEEtran}
\bibliography{biblio}

\end{document}